\newcommand{\Oh}{\mathcal{O}}
\newcommand{\xc}{\mathop{\mathrm{xc}}} % extension complexity
\newcommand{\TSP}{{\mathrm{TSP}}}
\newcommand{\PM}{{\mathrm{PM}}}
\newcommand{\HH}{{\mathcal{H}}}
\theoremstyle{plain}
\newtheorem{theorem}{Theorem}
\newtheorem{corollary}{Corollary}
\newtheorem{lemma}{Lemma}
\newtheorem{proposition}{Proposition}
\theoremstyle{definition}
\theoremstyle{remark}
\journal{arXiv}
\begin{document}

\begin{frontmatter}

\title{On the H-Free Extension Complexity of the TSP}

\author[ku]{David Avis} 
\ead{avis@cs.mcgill.ca}

\author[cu]{Hans Raj Tiwary}%\corref{cor1}}
\ead{hansraj@kam.mff.cuni.cz}

%\cortext[cor1]{Corresponding author}

%\address[mu]{GERAD and School of Computer Science, McGill University,
   %3480 University Street, Montreal, Quebec, Canada H3A 2A7.}

\address[ku]{Graduate School of Informatics,
   Kyoto University, Sakyo-ku, Yoshida, Kyoto 606-8501, Japan}

\address[cu]{KAM/ITI,
 Charles University,
 Malostransk\'e n\'am. 25,
 118 00 Prague 1, Czech Republic}

\begin{abstract} 
It is known that the extension complexity of the TSP polytope for the 
complete graph $K_n$ is exponential in $n$ even if the subtour inequalities 
are excluded. In this article we study the polytopes formed by removing other 
subsets $\HH$ of facet-defining inequalities of the TSP polytope. In 
particular, we consider the case when $\HH$ is either the set of blossom 
inequalities or the simple comb inequalities. These inequalities are 
routinely used in cutting plane algorithms for the TSP. We show that the 
extension complexity remains exponential even if we exclude these 
inequalities. In addition we show that the extension complexity of polytope 
formed by all comb inequalities is exponential. For our proofs, we introduce a 
subclass of comb inequalities, called $(h,t)$-uniform inequalities, which 
may be of independent interest. 
\end{abstract}

\begin{keyword}
Traveling salesman polytope \sep extended formulations \sep comb inequalities 
\sep lower bounds
\end{keyword}

\end{frontmatter}

%% \linenumbers

\section{Introduction} 
A polytope $Q$ is called an extended formulation or an extension of polytope $
 P$ if $P$ can be obtained as a projection of $Q$. Extended formulations are 
of natural interest in combinatorial optimization because even if $P$ has a 
large number of facets and vertices, there may exist a small extended 
formulation for it, allowing one to optimize a linear function over $P$ 
indirectly by optimizing instead over $Q$. Indeed, many polytopes of 
interest admit small extended formulations. (See
\cite{ConfortiCornuejolsZambelli10}, 
for example, for a survey).

Recent years have seen many strong lower bounds on the size of extended 
formulations. In particular, Fiorini et al. \cite{FMPTW} showed 
superpolynomial lower bounds for polytopes related to the MAX-CUT, TSP, and 
Independent Set problems. This was extended to more examples of polytopes 
related to other NP-hard problems having superpolynomial lower bounds
\cite{AT13,PV13}. 
Even though these results are remarkable, they are hardly surprising since 
existence of a small extension for any of these polytopes would have extremly 
unexpected consequences in complexity theory.

Subsequently, Rothvo{\ss} showed that the perfect matching polytope of 
Edmonds does not admit a polynomial sized extended formulation
\cite{Rothvoss14}, 
even though one can separate over it in polynomial time despite the 
polytope having exponentially many vertices and facets. To reconcile this 
apparent lack of power of compact extended formulations to capture even ``easy" 
problems like perfect matching, the authors of this article introduced the 
notion of $\HH$-free extended formulations \cite{AT15}. 

Intuitively, in this setting, given a polytope $P$ (presumably with a high 
extension complexity) and a set of valid inequalities $\HH$, one would like to 
understand the extent to which the inequalities in $\HH$ cause a bottleneck in 
finding a good extended formulation for $P$. More formally, the $\HH$-free 
extension complexity of a polytope $P$ measures the extension complexity of 
the polytope formed by removing the inequalities in $\HH$ from the
facet-defining 
inequalities of $P$. Particularly interesting classes of 
inequalities, for any polytope, are those for which one can construct an 
efficient separation oracle.

Clearly, in this setting, nothing interesting happens if the inequalities to 
be ``removed'' are redundant. In this article, we consider the Traveling 
Salesman Polytope and study its $\HH$-free extension complexity when $\HH$ is
the 
set of simple comb inequalities or the set of $2$-matching inequalities. Both
sets of 
inequalities form important classes  of inequalities for the TSP polytope. 
Whereas efficient separation algorithms are known for the $2$-matching 
inequalities, no such algorithm is known for comb inequalities, which generalize the set of $2$-matching inequalities \cite{PaRa82, FLL06}. 

In this article we identify a parameterized subset of comb inequalities which 
we call \emph{$(h,t)$-uniform comb inequalities} where the parameters require 
a uniform intersection between the handle and all the teeth of the comb. We use 
these inequalities to show that the intersection of comb inequalities defines 
a polytope with exponential extension complexity. Furthermore we show that if 
$\HH$ is a set of valid inequalities for the TSP polytope such that $\HH$ does 
not contain the $(h,t)$-uniform comb inequalities for some values of parameters
$h
$ and $t$, then the $\HH$-free extension complexity of the TSP polytope on $K_n$
 is at least $2^{\Omega(n/t)}$. As corollaries we obtain exponential lower
bounds for 
the $\HH$-free extension complexity of the TSP polytope with respect to
$2$-matching 
inequalities and simple comb inequalities.

The rest of this article is organized as follows. In the next section we 
describe the comb and $2$-matching inequalities and introduce the
$(h,t)$-uniform 
comb inequalities. We also introduce the central tool that we use: subdivided 
prisms of graphs. After a brief motivation for the study of subdivided prisms in
Section \ref{sec:motivation},  we prove our main lemma in Section
\ref{sec:main_lemma}.
We show that over suitably subdivided prisms of the complete graph, there 
exists a canonical way to translate perfect matchings into TSP tours that can 
be done without regard to any specific comb inequality. This 
translation, together with known tools developed in \cite{FFGT12} connecting 
extension complexity with randomized communication protocols gives the 
desired results for the problems of interest. Finally, we discuss applications of the main result in Section
\ref{sec:results}.

\section{Definitions}
Let $P$ be a polytope in $\mathbb{R}^d$. The extension complexity of $P$ -- 
denoted by $\xc(P)$ -- is defined to be the smallest number $r$ such that 
there exists an extended formulation $Q$ of $P$ with $r$ facets.

Let $G=(V,E)$ be a graph. For any subset $S$ of vertices, we denote the edges 
crossing the boundary of $S$ by $\delta(S)$. That is, $\delta(S)$ denotes the
set 
of edges $(u,v)\in E$ such that $|S\cap\{u,v\}|=1.$

The TSP polytope for the complete graph $K_n$ is defined as the convex hull 
of the characteristic vectors of all TSP tours in $K_n$, and is denoted by $
\TSP_n$. Similary, $\PM_n$ denotes the convex hull of all perfect matchings in 
$K_n$. We say that any inequality $a^\intercal x\leqslant b$ is valid for a 
polytope $P$ if every point in $P$ satisfies this inequality. For a point $v$ 
in $P$, the slack of $v$ with respect to a valid inequality $a^\intercal x
\leqslant b$ is defined to be the nonnegative number $b-a^\intercal v.$
 
\subsection{Comb inequalities for TSP}
For a graph $G=(V,E)$, a comb is defined by a subset of vertices $H$ called 
the handle and a set of subsets of vertices $T_i, 1\leqslant i \leqslant k$ 
where $k$ is an odd number at least three. The sets $T_i$ are called the 
teeth. The handle and the teeth satisfy the following properties:

\begin{eqnarray}\displaystyle
 H\cap T_i \neq\emptyset, \\
 T_i\cap T_j=\emptyset,&~~~~\forall i\neq j\\
 H\setminus \bigcup_{i=1}^k T_i\neq\emptyset
\end{eqnarray}

The following inequality is valid for the TSP polytope of $G$ and is called 
the comb inequality for the comb defined by handle $H$ and teeth $T_i$ as above.

$$\displaystyle x(\delta(H))+\sum_{i=1}^{k}x(\delta(T_i))\geqslant 3k+1$$

Gr\"otschel and Padberg \cite{GrotschelP79a} showed that every comb 
inequality defines a facet of $\TSP_n$ for each $n\geqslant 6.$ It is not 
known whether separating over comb inequalities is NP-hard, neither is a 
polynomial time algorithm known.

For a given comb $C$ and a TSP tour $T$ of $G$, the slack between the
corresponding comb 
inequality and $T$ is denoted by $\text{sl}_{\text{comb}}(C,T).$

\subsection*{$2$-matching inequalities} 

A comb inequality corresponding to a handle $H$ and $k$ teeth $T_i$ is called 
a $2$-matching inequality if each tooth $T_i$ has size exactly two. In 
particular this means that $|H\cap T_i|=1 $ and $|T_i \setminus H|=1$ for 
each $1\leqslant   i\leqslant k.$ These inequalities are sometimes also 
referred to as blossom inequalities. Padberg and Rao \cite{PaRa82} gave a 
polynomial time algorithm to separate over the $2$-matching inequalities.

\subsubsection*{Simple comb inequalities}
A comb inequality corresponding to a handle $H$ and $k$ teeth $T_i$ is called 
a simple comb inequality if $|H\cap T_i|=1$ or $|T_i\setminus H|=1$  for each $
1\leqslant i\leqslant k.$ 
Simple comb inequalities contain all the $2$-matching inequalities. It is not 
known whether one can separate over them in polynomial time.

\subsubsection*{$(h,t)$-uniform comb inequalities}
Let us define a subclass of comb inequalities called \emph{$(h,t)$-uniform 
comb inequalities} associated with what we will call \emph{$(h,t)$-uniform 
combs} for arbitrary $1\leqslant h < t.$ A comb, with handle $H$ and $k$ 
teeth $T_i$, is said be $(h,t)$-uniform if $|T_i|=t$ and $H\cap T_i=h,$ for 
all $1\leqslant i\leqslant k.$

\subsection{Odd set inequalities for perfect matching}
Let $V$ denote the vertex set of $K_n$. A subset $U\subset V$ is called an odd
set if the cardinality of $U$ is odd. For every odd set $U$ the following
inequality is valid for the perfect matching polytope $\PM_n$ and is called an
odd set inequality.
$$x(\delta(U)) \geqslant 1$$
For a given odd set $S$ and a perfect matching $M$ of $K_n$, the slack between
the corresponding 
odd set inequality and $M$ is denoted by $\text{sl}_{\text{odd}}(S,M).$

\subsection{$t$-subdivided prisms of a graph}
A prism over a graph $G$ is obtained by taking two copies of $G$ and connecting
corresponding 
vertices. It is helpful to visualise this as stacking the two copies one over
the other and then 
connecting corresponding vertices in the two copies by a vertical edge. A
$t$-subdivided prism is then 
obtained by subdividing the vertical edges by putting $t-2$ extra vertices on
them. See Figure \ref{fig:tower} for an example.
\begin{figure}[!h]
  \begin{center}
    \includegraphics[width=0.2\textwidth]{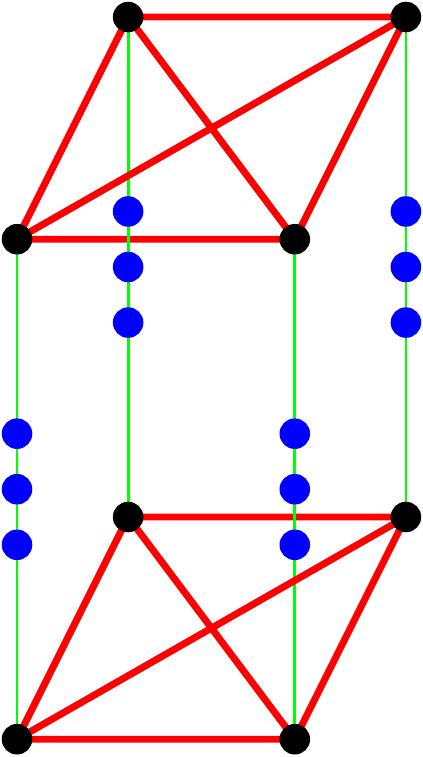}
  \end{center}
	\caption{A $5$-subdivided prism over $K_4.$}
	\label{fig:tower}
\end{figure}

Let $G$ be the $t$-subdivided prism of $K_n$.  Let the vertices of the two
copies be labeled 
$u^1_1,\ldots,u^1_{n}$ and $u^t_1,\ldots,u^t_{n}$. As a shorthand we will denote
the path $u^1_i,u^2_i,\ldots,u^t_i$ as $u^1_i{\leadsto}u^t_i.$ Similarily,
${u^t_i{\leadsto}u^1_i}$ will denote $u_i^t,\ldots,u_i^2,u_i^1.$

The graph $G$ has path ${u^1_i{\leadsto}u^t_i}$ for all $i\in[n]$ and $(u^1_
i,u^1_j), (u^t_i,u^t_j)$ for all $i\neq j, i,j\in [n].$ Thus $G$ has $tn$ 
vertices and $2{n\choose2}+(t-1)n$ edges.

\section{Motivation} \label{sec:motivation}
The motivation for looking at $t$-subdivided prisms stems from a simple 
observation which we state in the form of a proof of the following proposition:

\begin{proposition}
Let $2MP(n)$ be the convex hull of the incidence vectors of all $2$-matchings 
of the complete graph $K_n.$ Then, $\xc(2MP(n))\geqslant 2^{\Omega(n)}.$
\end{proposition}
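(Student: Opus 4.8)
The plan is to connect $2$-matchings of $K_n$ to perfect matchings, for which Rothvo{\ss} \cite{Rothvoss14} has already established an exponential lower bound on extension complexity. The key observation is that the $t$-subdivided prism construction provides exactly the right bridge: a perfect matching in $K_n$ lifts naturally to a $2$-matching in a graph built on two copies of $K_n$. Specifically, I would take the prism over $K_n$ (the case $t=2$, so each vertex $u_i^1$ is joined to $u_i^2$ by a single vertical edge), and argue that $2$-matchings of this prism graph encode perfect matchings of $K_n$.

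\medskip

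\noindent\textbf{Main steps.} First I would recall that a $2$-matching is a subset of edges in which every vertex has degree exactly $2$, so the incidence vector lies in $\{0,1\}^E$ with all vertex-degrees equal to $2$; equivalently it is a disjoint union of cycles covering all vertices (a $2$-factor). Second, I would exhibit a low-complexity affine relationship — a face, or a projection together with an intersection with coordinate hyperplanes — between $2MP$ of the prism and $\PM_n$. The natural encoding is: force every vertical edge $(u_i^1,u_i^2)$ to lie in the $2$-factor (by setting its coordinate to $1$, i.e.\ intersecting with the hyperplanes $x_{(u_i^1,u_i^2)}=1$). Once each vertex is forced to use its vertical edge, the degree-$2$ condition leaves exactly one more edge at each vertex, and that edge must lie inside one of the two copies of $K_n$; a short parity/symmetry argument shows the residual edges in the top copy form a perfect matching of $K_n$, and likewise in the bottom copy. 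This yields a face of $2MP$(prism) that projects onto (a product of two copies of) $\PM_n$. Third, I would invoke the standard facts that taking a face and projecting do not increase extension complexity, so $\xc(\PM_n) \leqslant \xc(2MP(\text{prism over }K_n))$, and since the prism has $\Oh(n)$ vertices, $\xc(2MP(n')) \geqslant \xc(\PM_{n'/2})$ for the relevant size $n'$. Combining with $\xc(\PM_n)\geqslant 2^{\Omega(n)}$ gives the claimed bound.

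\medskip

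\noindent\textbf{Main obstacle.} The delicate point is verifying that the chosen face really projects \emph{onto} the perfect matching polytope and not merely into it — that is, that every perfect matching of $K_n$ actually arises, and conversely that the forced-vertical-edge $2$-factors are in bijection with (pairs of) perfect matchings. Here the degree-$2$ constraint is essential: after fixing the vertical edges, each top-copy vertex has residual degree $1$, so the top residual edges form a perfect matching of the top $K_n$, and the two copies decouple cleanly. I would need to confirm that this combinatorial correspondence is realized by an affine map at the level of polytopes (not just vertices), which follows because both the face-defining equations and the projection are linear and the vertices of $2MP$(prism) restricted to this face are exactly the claimed $2$-factors. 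The rest is bookkeeping with the monotonicity of $\xc$ under faces and projections.
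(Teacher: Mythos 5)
Your proposal is correct and follows essentially the same route as the paper: reduce to the perfect matching polytope via a prism over $K_n$, using face and projection monotonicity of extension complexity together with Rothvo{\ss}'s lower bound. The only cosmetic difference is that you force the vertical edges explicitly via the face $x_e=1$ on the ordinary ($t=2$) prism, whereas the paper uses the $3$-subdivided prism, in which the degree-$2$ middle vertices force all vertical paths into every $2$-matching automatically.
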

\begin{proof}
Let $G$ be a graph with $n$ vertices and $m$ edges and let $G'$ be the
$3$-subdivided 
prism of $G $. $G'$ has $3n$ vertices and $2m+2n$ edges. Any 
$2$-matching in $G'$ contains all the vertical edges and thus when restricted 
to a single copy -- say the bottom one --  of $G$ gives a matching in $G$. 
Conversely, any matching in $G$ can be extended to a (not necessarily unique) 
$2$-matching in $G'$.
 
Taking $G$ as $K_n$ we obtain a $G'$ that is a subgraph of $K_{3n}.$ The
$2$-matching 
polytope of $G'$ lies on a face of the $2$-matching polytope of the 
complete graph on $3n$ vertices (corresponding to all missing edges having 
value $0$). Therefore, the extension complexity of the $2$-matching polytope
$2MP(n)$ 
is at least as large as that of the perfect matching polytope. That 
is, $\xc(2MP(n))\geqslant  2^{\Omega(  n)}.$ 
\end{proof}

\noindent The above generalizes to $p$-matching polytopes for arbitrary $p$ in
the 
obvious way, and is probably part of folklore\footnote{W. Cook (private 
communication) attributes the same argument to T. Rothvo{\ss}}.

The generalization of the $3$-subdivided prism to larger subdivisions allows 
us to be able to argue not only about the $2$-matching inequalities -- which are 
the facet-defining inequalities for the $2$-matching polytope -- but also 
about comb inequalities by using the vertical paths as teeth for constructing 
combs.

\section{Main Tools}{\label{sec:main_lemma}}
%\subsection{Communication protocols with computation in Expectation}
\subsection{EF-protocols}
Given a matrix $M$, a randomized communication protocol computing $M$ in 
expectation is a protocol between two players Alice and Bob. The players, 
having full knowledge of the matrix $M$, agree upon some strategy. Next, Alice 
receives a row index $i$ and Bob receives a column index $j$. Based on their 
agreed-upon strategy and their respective indices, they exchange a few bits 
and either one of them outputs a non-negative number, say $X_{ij}.$ For 
brevity, we will call such protocols EF-protocols. An EF-protocol is said to 
correctly compute $M$ if for every pair $i,j$ of indices,
$\mathbb{E}[X_{ij}]=M_{ij},$ where $\mathbb{E}[X_{ij}]$ is the expected value of
the random variable $X_{ij}$.

The complexity of the protocol is measured by the number of bits exchanged by 
Alice and Bob in the worst case. It is known that the base-$2$ logarithm of 
the extension complexity of any polytope $P$ is equal to the complexity of 
the best EF-protocol that correctly computes the slack matrix of $P$ \cite{ 
FFGT12}. We will use this fact to show our lower bounds by showing that a 
sublinear EF-protocol for problems of our interest would yield a sublinear
EF-protocol 
for the slack matrix of the perfect matching polytope. First we 
restate some known results about EF-protocols and extension complexity of 
perfect matching polytope in a language that will be readily usable to us.

\begin{proposition}\label{prop:ffgt}{\textbf{\cite{FFGT12}}} 
Let $P$ be a polytope and $ S(P)$ its slack matrix. There exists an EF 
protocol of complexity $\Theta(k)$ that correctly computes $S(P)$ if and only 
if there exists an extended formulation of $P$ of size $2^{\Theta(k)}$.
\end{proposition}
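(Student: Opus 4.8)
The plan is to route everything through Yannakakis's factorization theorem, which identifies the extension complexity $\xc(P)$ with the nonnegative rank $\nnegrk(S(P))$ of the slack matrix. Once this is in hand, the proposition reduces to the purely combinatorial claim that, for any nonnegative matrix $M$, the least complexity of an EF-protocol computing $M$ in expectation is $\Theta(\log_2 \nnegrk(M))$; taking $M = S(P)$ and writing $\nnegrk(S(P)) = 2^{\Theta(k)}$ then yields both implications at once. So after recalling Yannakakis, I would prove separately that every complexity-$c$ protocol forces $\nnegrk(M)\le 2^{c}$, and that a factorization of rank $r=\nnegrk(M)$ admits a protocol of complexity $O(\log_2 r)$.

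For the direction ``protocol $\Rightarrow$ small factorization,'' I would fix a protocol of worst-case complexity $c$ and extract a nonnegative factorization of $M$ of rank at most $2^c$. The key observation is that, because the protocol uses private randomness, the probability that an execution on inputs $(i,j)$ reaches a given leaf $\ell$ of the protocol tree factors as $\alpha_\ell(i)\,\beta_\ell(j)$: Alice's messages depend only on her input $i$ and the transcript so far, Bob's only on $j$, so the transcript probability splits as an Alice-part times a Bob-part. Folding the nonnegative, leaf-dependent output value into whichever player produces it, each leaf contributes a nonnegative rank-one matrix $f_{\cdot\ell}\,v_{\ell\cdot}$, and summing $M_{ij}=\sum_\ell f_{i\ell}v_{\ell j}$ over the at most $2^c$ leaves gives $\nnegrk(M)\le 2^c$, i.e. $\log_2\nnegrk(M)\le c$.

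For the converse ``small factorization $\Rightarrow$ protocol,'' I start from $M_{ij}=\sum_{k=1}^r f_{ik}v_{kj}$ with $r=\nnegrk(M)$ and build a protocol of complexity $O(\log_2 r)$. Alice computes her row sum $F_i=\sum_k f_{ik}$ and samples an index $k$ with probability $f_{ik}/F_i$; transmitting $k$ costs $\lceil\log_2 r\rceil$ bits, after which Bob knows $v_{kj}$. The main obstacle is that the correct output is $F_i\,v_{kj}$, yet the scalar $F_i$ is private to Alice and cannot be sent exactly in finitely many bits. I would circumvent this with an activation trick: let $\Phi=\max_i F_i$, a global constant known to both from $M$ and its factorization; Alice first activates with probability $F_i/\Phi$ and only then samples and sends $k$, while Bob outputs $\Phi\,v_{kj}$ if active and $0$ otherwise. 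A one-line computation gives $\mathbb{E}[X_{ij}]=(F_i/\Phi)\sum_k(f_{ik}/F_i)\,\Phi\, v_{kj}=M_{ij}$, the output depends only on the public constant $\Phi$ and Bob's own datum $v_{kj}$, and the cost is $1+\lceil\log_2 r\rceil=O(\log_2 r)$ bits.

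Combining the two bounds shows that the protocol complexity and $\log_2\nnegrk(S(P))=\log_2\xc(P)$ agree up to constant factors, which is exactly the asserted equivalence between $\Theta(k)$ protocols and $2^{\Theta(k)}$ extensions. I expect the normalization issue in the converse to be the crux of the argument; the forward direction is essentially the standard rectangle decomposition of communication transcripts, adapted to read off nonnegative rank-one summands rather than monochromatic rectangles.
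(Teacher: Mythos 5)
The paper gives no proof of this proposition---it is imported verbatim from \cite{FFGT12}---and your reconstruction is precisely the argument of that reference: Yannakakis's identity $\xc(P)=\nnegrk(S(P))$, the leaf-by-leaf rectangle decomposition showing a $c$-bit private-coin protocol yields $\nnegrk(M)\leqslant 2^c$, and the sampling protocol with the global normalization $\Phi=\max_i F_i$ (handling the zero-row case trivially, since then Alice never activates and the output is correctly $0$). The proof is correct and matches the source's approach, so there is nothing to flag.
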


Combining lower bounds by Rothvo{\ss} \cite{Rothvoss14} with the above 
mentioned equivalence by Faenza et al. \cite{FFGT12}, it is easy to see that 
no sublinear protocol computes the slack matrix of the perfect matching 
polytope.

\begin{proposition}\label{prop:rothvoss-pm}{\textbf{\cite{Rothvoss14}}} 
Any EF-protocol that correctly computes the slack matrix of the perfect 
matching polytope of $K_n$ requires an exchange of $\Omega(n)$ bits. 
\end{proposition}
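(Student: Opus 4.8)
The plan is to obtain the bound by simply combining the two results already at our disposal: Rothvo{\ss}'s exponential lower bound on the extension complexity of the perfect matching polytope and the protocol--formulation equivalence recorded in Proposition~\ref{prop:ffgt}. No new polytope-specific work is required, since all of the combinatorial difficulty has been pushed into the cited theorem.

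First I would invoke Rothvo{\ss}'s theorem~\cite{Rothvoss14}, which asserts that $\xc(\PM_n)=2^{\Omega(n)}$, or equivalently that $\log_2\xc(\PM_n)=\Omega(n)$; this is the sole source of hardness and we take it as given. Next I would read Proposition~\ref{prop:ffgt} in the contrapositive direction, ``a cheap protocol yields a small formulation''. Suppose, toward a contradiction, that some EF-protocol correctly computes the slack matrix $S(\PM_n)$ while exchanging only $o(n)$ bits. Then Proposition~\ref{prop:ffgt} would furnish an extended formulation of $\PM_n$ of size $2^{o(n)}$, contradicting the lower bound $\xc(\PM_n)=2^{\Omega(n)}$ from the previous step. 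Hence every EF-protocol that correctly computes $S(\PM_n)$ must exchange $\Omega(n)$ bits, which is precisely the claim.

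The deduction is little more than bookkeeping once the two cited facts are in hand; the only point demanding care is tracking the asymptotic regimes through the equivalence, namely checking that the $\Theta(\cdot)$ relating protocol complexity to $\log_2\xc$ in Proposition~\ref{prop:ffgt} converts the $2^{\Omega(n)}$ lower bound on formulation size into an $\Omega(n)$ lower bound on protocol complexity, with the constants in the two occurrences of $\Omega$ differing only by the absolute constant hidden in that $\Theta$. I therefore do not expect any genuine obstacle here: the substantive mathematical difficulty, establishing the exponential lower bound on $\xc(\PM_n)$ itself, resides entirely within Rothvo{\ss}'s theorem and lies outside the scope of this short inference.
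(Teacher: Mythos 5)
Your proposal is correct and is exactly the argument the paper intends: the text preceding Proposition~\ref{prop:rothvoss-pm} states that it follows by combining Rothvo{\ss}'s bound $\xc(\PM_n)=2^{\Omega(n)}$ with the protocol--formulation equivalence of Proposition~\ref{prop:ffgt}, which is precisely your contrapositive deduction. No further comment is needed.
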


\subsection{Uniform combs of odd sets}
Let $n$ and $t$ be positive integers. In the rest of the article we will 
assume that $n$ is a multiple of $t$. Since we are interested in asymptotic 
statements only, this does not result in any loss of generality. Let $G$ be 
the $t$-subdivided prism of $K_{n/t}$ for some $t\geqslant 2.$ Given an odd 
set $S$ and a perfect matching $M$ in $K_{n/t}$, and arbitrary $1\leqslant h <
t$, we are interested in constructing a comb $C$ and a TSP 
tour $T$ in $K_n$ such that the following conditions hold:

\begin{enumerate}
 \item[\textbf{(C1):}] $C$ is a $(h,t)$-uniform comb.
 \item[\textbf{(C2):}] $C$ depends only on $S$ and $2$ edges of $M.$
 \item[\textbf{(C3):}] $T$ depends only on $M.$
 \item[\textbf{(C4):}]
$\text{sl}_{\text{comb}}(C,T)=\text{sl}_{\text{odd}}(S,M).$
\end{enumerate}

%Given an odd set $S$ and a perfect matching $M$ of $K_n$, we call a pair
%$(C,T)$
%satisfying the 
%above requirements the \emph{parsimonious ct-pair} (\textcolor{red}{NAME!}) for
%$(S,M)$.

If such a pair $(C,T)$ of a comb and a TSP tour is shown to exist for every 
pair $(S,M)$ of an odd set and a perfect matching, then we can show that any 
EF-protocol for computing the slack $\text{sl}_{\text{comb}}(C,T)$ can be 
used to construct an EF-protocol for computing $\text{sl}_{\text{odd}}(S,M)$ 
due to condition (C4). Furthermore, due to conditions (C2) and (C3) the 
number of bits required for the later protocol will not be much larger than 
the number of bits required for the former, as $C$ can be locally constructed 
from $S$ after an exchange of two edges, and $T$ can be locally constructed 
from $M.$

Now we show that such a pair does exist if at least two edges of $M$ are 
contained in $S$ and $|S|\geqslant 5$. 

\begin{lemma}\label{lem:parsimonious_ct} 
Let $(S,M)$ be a pair of an odd set and a perfect matching in $K_{n/t}$, and 
let  $1\leqslant h<t$. Suppose that $|S| \geqslant 5$, and let $w_1,w_2, w_3, 
w_4\in S$ be distinct with $(w_1,w_2)$ and $(w_3,w_4)$ in $M$. Then, there 
exists a pair $(C,T)$ of a comb $C$ and a TSP tour $T$ in $K_{n}$ 
satisfying the four conditions (C1)--(C4).
\end{lemma}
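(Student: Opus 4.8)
The plan is to encode the perfect‑matching slack inside a single comb slack, using the vertical paths of the prism $G$ as teeth. Write $m=n/t$, so that the vertices of $G$ are the $m$ paths $u^1_i\leadsto u^t_i$ for $i\in[m]$, the top copy being the vertices $u^t_i$ and the bottom copy the $u^1_i$. First I would construct the tour $T$ from $M$ alone. I route $T$ so that every vertical path is traversed exactly once, realizing each edge $(i,j)\in M$ as the top edge $(u^t_i,u^t_j)$, and I close the Hamiltonian cycle by joining the bottom ends $u^1_i$ by a connector perfect matching chosen canonically from $M$ so that its union with $M$ is a single cycle. This $T$ depends only on $M$, which is (C3), and since each path is entered and left exactly once we get $x_T(\delta(P))=2$ for every vertical path $P$.

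Next I would build the comb from $S$. Take the teeth to be the $|S|$ vertical paths over $S$, so $k=|S|$ is an odd number at least $3$; here the hypotheses $|S|\geqslant 5$ and the four matched vertices $w_1,\dots,w_4\in S$ guarantee enough teeth and the interior matched structure. Take the handle $H$ to consist of the top $h$ levels $u^{t-h+1}_i,\dots,u^t_i$ of each tooth $i\in S$, so that $|T_i|=t$ and $|H\cap T_i|=h$; this makes $C$ an $(h,t)$-uniform comb, which is (C1). The only comb axiom still to arrange is $H\setminus\bigcup_i T_i\neq\emptyset$. I would meet it by adjoining a single body vertex $u^t_c$ with $c\notin S$ whose $M$-partner $w:=M(c)$ lies in $S$; such an edge of $M$ leaving $S$ exists because $|S|$ is odd, and since $c\notin S$ this leaves every $|H\cap T_i|$ equal to $h$. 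The resulting comb depends only on $S$ together with a bounded number of edges of $M$, which is (C2).

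The core of the argument is the slack identity (C4). Each tooth contributes $x_T(\delta(T_i))=2$, hence $\sum_i x_T(\delta(T_i))=2|S|$. For the handle I would count the tour edges crossing $\delta(H)$, first ignoring the body vertex, in three groups: the vertical edges, where each tooth-path contributes exactly one crossing between level $t-h$ and $t-h+1$, for a total of $|S|$; the top edges, which are precisely the edges of $M$, so that a top edge crosses $\delta(H)$ iff its endpoints are split by $S$, contributing $x_M(\delta(S))$; and the bottom connector edges, which meet no handle vertex and contribute $0$. Adjoining $u^t_c$ then adds one vertical crossing on path $c$ and makes the edge $(c,w)$ internal, removing one top crossing, for zero net change. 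Thus $x_T(\delta(H))=|S|+x_M(\delta(S))$, and
\[
\text{sl}_{\text{comb}}(C,T)=\bigl(|S|+x_M(\delta(S))\bigr)+2|S|-(3|S|+1)=x_M(\delta(S))-1=\text{sl}_{\text{odd}}(S,M).
\]

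The step I expect to be the main obstacle is reconciling the comb-validity axiom $H\setminus\bigcup_i T_i\neq\emptyset$ and the $(h,t)$-uniformity with the exact slack bookkeeping: the extra handle vertex must be attached with zero net change to $x_T(\delta(H))$, which forces it to be the top endpoint of a non-$S$ path reached through an $M$-edge leaving $S$, rather than any interior or bottom vertex. Making this attachment, the canonical choice of bottom connectors, and the routing of $T$ all depend on only $S$ and a constant number of edges of $M$ — so that (C2) and (C3) are genuinely local, which is where the matched edges inside $S$ supplied by the hypothesis are meant to be used — is the delicate part that the construction must pin down precisely.
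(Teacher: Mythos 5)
Your overall strategy --- vertical paths of the prism as teeth, the tour traversing every column exactly once with the matching realized on one layer, and the resulting counts $x_T(\delta(T_i))=2$ and $x_T(\delta(H))=|S|+|\delta(S)\cap M|$ --- is the same as the paper's, and your slack arithmetic is correct. The genuine problem is the one you flagged and did not resolve: how the comb acquires its body. You obtain $H\setminus\bigcup_i T_i\neq\emptyset$ by adjoining $u^t_c$ for a vertex $c\notin S$ whose $M$-partner lies in $S$, which makes $C$ depend on an edge of $M$ in $\delta(S)$. That edge is not one of the two edges the hypothesis hands you --- those are $(w_1,w_2)$ and $(w_3,w_4)$, both \emph{inside} $S$, and your construction never uses them. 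This defeats the purpose of condition (C2): in the protocol of Lemma \ref{lem:ef-protocol-kl-uniform-to-pm}, the present lemma is invoked exactly in the case where Alice's two probes returned partners inside $S$, so she holds no element of $\delta(S)\cap M$ and cannot identify one with $\Oh(\log(n/t))$ bits (locating a single crossing edge of an unknown matching is essentially the communication problem being lower-bounded). As you yourself observe, within your framework the dependence is forced: attaching any body vertex not reached through a crossing $M$-edge changes $x_T(\delta(H))$ by $+2$ and destroys (C4). So the ``delicate part'' you defer is not a detail; it is the step that fails.

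The paper resolves it by choosing the body inside the $S$-columns rather than outside: only the $s-2$ columns over $S\setminus\{w_1,w_3\}$ serve as teeth, while the \emph{entire} columns over $w_1$ and $w_3$ (together with all of $S$ in the bottom layer and the first $h$ levels of the remaining $S$-columns) are placed in the handle. The tour is built so that the top edge $(w^t_1,w^t_3)$ is always present, hence internal to $H$, and the only vertical crossings of $\delta(H)$ occur once in each of the $s-2$ teeth; the handle count is again $|\delta(S)\cap M|+(s-2)$ and the slack identity follows. This is precisely where the hypothesis that $(w_1,w_2)$ and $(w_3,w_4)$ belong to $M$ and lie inside $S$ is consumed, and it is why the comb is a function of $S$ and the two exchanged edges only. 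To repair your proof you would need to make this kind of swap --- sacrifice two teeth to form the body from data Alice already has --- rather than import information about $\delta(S)\cap M$.
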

\begin{proof}

Let $|S|=s$. For simplicity of exposition, we assume that the vertices of $S$ 
are labeled $w_1,\ldots,w_s.$ By $w_i^j$, we denote the copy of $w_i$ in the $
j$-th layer of the $t$-subdivided prism over $K_{n/t}.$

The comb $C$ is constructed as follows. The handle $H$ is obtained by taking 
all vertices in $ S$ and the copies $w^2_1,\ldots,w^t_1$ and $w^2_3,\ldots,w^t
_3.$ For every other vertex $w\in S$ the vertices $w^2, \ldots,w^h$ are also 
added to $H.$  The teeth $T_i$ are formed by pairing each vertex $v$ in $
S\setminus\{w_1,w _3 \}$ with its copies $v^2,\ldots,v^t$ producing $s-2$ 
teeth. See Figure \ref{fig:combs_from_oddset} for an illustration. Since
$s\geqslant 5$ is odd, the number of teeth is 
odd and at least $3$. Thus, the constructed comb is $(h,t)$-uniform satisfying
conditions (C1) and (C2), and 
the corresponding comb inequality is 
\begin{eqnarray}\label{eqn:comb}\displaystyle x(\delta(H))+\sum_{i=1}^
{s-2}x(\delta(T_i))\geqslant 3(s-2)+1.\end{eqnarray}

\begin{SCfigure}
  \centering
  	\caption{Construction of a comb from given odd set: The odd set
consists 
of $5$ vertices displayed as big filled circles in the bottom copy. The 
corresponding handle consists of all vertices represented by filled circles. 
The teeth are represented by the vertical ellipsoidal enclosures.\newline The 
big circles represent vertices of the original graph and their top copies. 
The small circles represent the $h$-th copy, while the other copies have been 
omitted here. Bold edges at the bottom are matching edges. All other edges 
displayed are just for illustration of the relationship of various copies of 
vertices. }
    \includegraphics[width=0.4\textwidth]{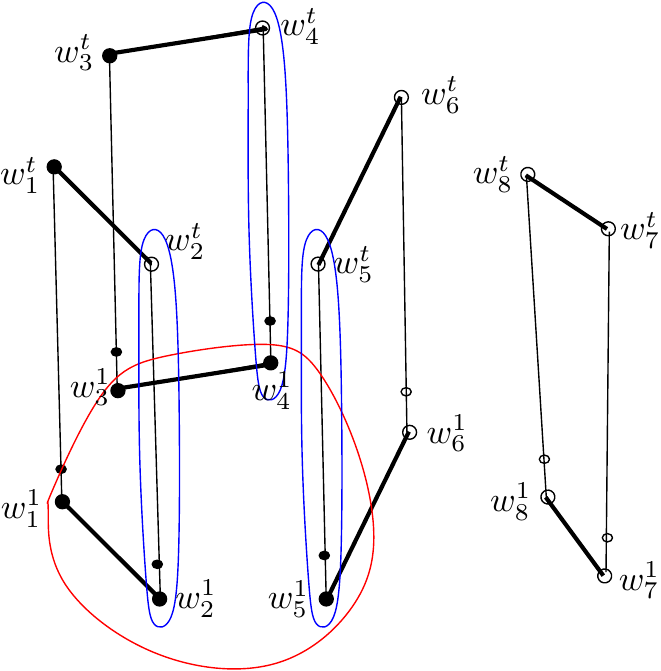}
	\label{fig:combs_from_oddset}
\end{SCfigure}

To construct a tour $T$ from the given perfect matching $M$ such that conditions
(C3) and (C4) are satisfied, we start with a subtour
$({w^1_1{\leadsto}w^t_1},{w^t_3{\leadsto}w ^1_3}
,w^1_4{\leadsto}w^t_4,{w^t_2{\leadsto}w^1_2} ,w^1_1).$ 
At each stage we maintain a 
subtour that contains all matching edges on the induced vertices in the lower 
copy, the edge $(w^t_1,w^t _3)$, and at least one top edge different from
$(w^t_1,w^t_3).$ Clearly the starting subtour satisfies these requirements. As
long 
as we have some matching edges in $M$ that are not in our subtour, we pick an 
arbitrary edge $(w_a,w_b)$ in $M $ and extend our subtour as follows. Select 
a top edge $ (w^ t_q,w^t_r)$ different from $(w^t_1,w^t_3),$ remove the edge 
and add the path $
(w^t_q,{w^t_a{\leadsto}w^1_a},{w^1_b{\leadsto}w^t_b},w^t_r).$ 
The new subtour contains the selected perfect matching edge $(w^1_a,w^1 _b)
,$ the paths ${w^1_a{\leadsto}w^t_a}$ and ${w^1_b{\leadsto}w^t_b}$ and has 
one more top edge distinct from $(w^t_1,w^t_3)$ than in the previous subtour. 
See Figure \ref{fig:tour_from_matching} for an example.

\begin{figure}[t]
        \centering
        \begin{subfigure}[b]{0.32\textwidth}
                \includegraphics[width=\textwidth]{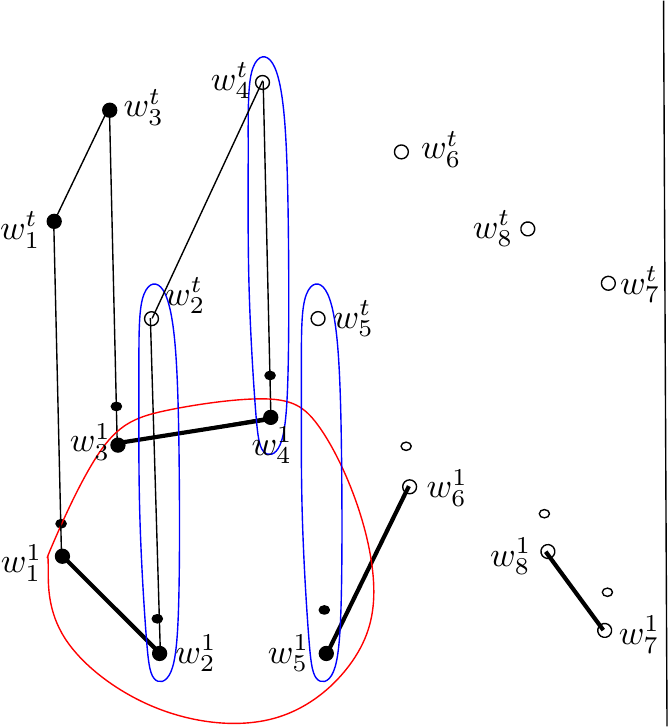}
                \caption{The initial tour going through
$w_1^1,w_1^t,w_3^t,w_3^1,w_4^1,w_4^t,w_2^t,w_2^1,w_1^1$}
                \label{fig:tour_from_matching_step1}
        \end{subfigure}
        \begin{subfigure}[b]{0.32\textwidth}
                \includegraphics[width=\textwidth]{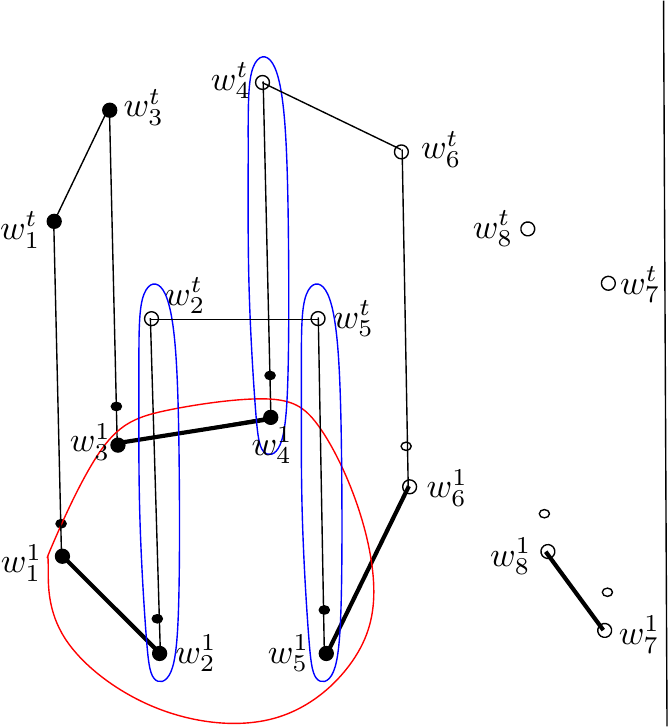}
                \caption{Adding a new matching edge $(w_5,w_6)$ by removing
$(w_2^t,w_4^t)$}
                \label{fig:tour_from_matching_step2}
        \end{subfigure}
        \begin{subfigure}[b]{0.32\textwidth}
                \includegraphics[width=\textwidth]{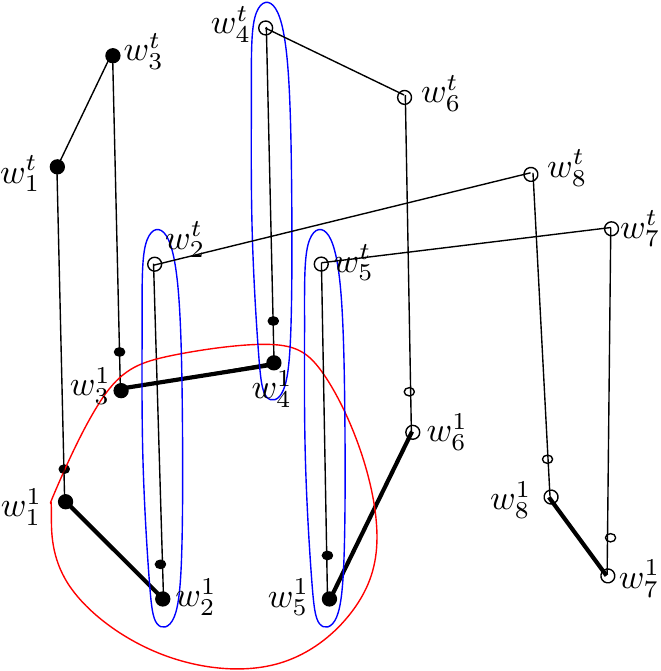}
                \caption{The final tour}
                \label{fig:tour_from_matching_step3}
        \end{subfigure}
        \caption{Constructing a TSP tour from a perfect
matching.}\label{fig:tour_from_matching}
\end{figure}

At the completion of the procedure, we have a TSP tour that satisfies the
following properties:
\begin{enumerate}
 \item Each edge of $M$ is used in the tour.
 \item Each vertical path ${w^1_i{\leadsto}w^t_i}$ for all $i\in[n]$ is used in
the tour.
 \item Edge $(w^t_1,w^t_3)$ is used in the tour.
\end{enumerate}

From the construction, edges in $|\delta(H)\cap T|$ are
precisely the edges in $|\delta(S)\cap M|$ together with $s{-}2$ other edges
exiting the comb: one through each
of the $s{-}2$ teeth. Therefore, $|\delta(H)\cap T|=|\delta(S)\cap M|+s-2$. 
Also, the tour $T$ enters and exits each teeth precisely once so
$|\delta(T_i)\cap T|=2$ 
for each of the $s{-}2$ teeth. Substituting these values in the inequality
\ref{eqn:comb}, we obtain the slack $\text{sl}_{\text{comb}}(C,T)=
|\delta(S)\cap M|+ (s-2)+2(s-2)-3(s-2)-1 =\text{sl}_{\text{odd}}(S,M).$ 
This completes the proof because the pair $(C,T)$  satisfies conditions
(C1)--(C4).

\end{proof}

We are finally ready to state the main Lemma of this article. Using the
existence of the pair $(C,T)$ as described earlier and the fact that any
EF-protocol for the perfect matching polytope requires an exchange of a linear
number of bits, we will lower bound the number of bit exchanged by any
EF-protocol computing the slack of $(h,t)$-uniform comb inequalities with
respect to TSP tours. In the next Section we will use this Lemma multiple times
by fixing different values for the parameters $h$ and $t.$

\begin{lemma}\label{lem:ef-protocol-kl-uniform-to-pm}
Any EF-protocol computing the slack of $(h,t)$-uniform comb inequalities with
respect to the TSP tours of $K_{n},$ requires an exchange of $\Omega(n/t)$ bits.
Equivalently, the extension complexity of the 
polytope of $(h,t)$-uniform comb inequalities is $2^{\Omega(n/t)}.$ 
\end{lemma}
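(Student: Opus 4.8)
The plan is to derive this lower bound by reducing the EF-protocol problem for the perfect matching polytope to the EF-protocol problem for $(h,t)$-uniform comb inequalities, using Lemma~\ref{lem:parsimonious_ct} as the combinatorial heart of the reduction. Concretely, suppose toward contradiction that there is an EF-protocol of complexity $o(n/t)$ computing the slack of $(h,t)$-uniform comb inequalities with respect to the TSP tours of $K_n$. I would use this to build an EF-protocol computing $\text{sl}_{\text{odd}}(S,M)$ for odd sets $S$ and perfect matchings $M$ in $K_{n/t}$, and then invoke Proposition~\ref{prop:rothvoss-pm}, which says any such protocol needs $\Omega(n/t)$ bits. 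Since the perfect matching polytope here lives on $K_{n/t}$, the Rothvoß bound gives $\Omega(n/t)$, exactly matching the claimed lower bound.

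First I would set up the reduction. Alice receives an odd set $S$ (a row index of the perfect matching slack matrix) and Bob receives a perfect matching $M$ (a column index). The goal is for them to output a nonnegative random variable with expectation $\text{sl}_{\text{odd}}(S,M)$ while exchanging only $O(\text{complexity of the comb protocol}) + O(\log n)$ bits. By Lemma~\ref{lem:parsimonious_ct}, whenever $|S|\geqslant 5$ and at least two matching edges $(w_1,w_2),(w_3,w_4)$ lie inside $S$, there is a pair $(C,T)$ with $\text{sl}_{\text{comb}}(C,T)=\text{sl}_{\text{odd}}(S,M)$. The key structural facts are (C2), that $C$ depends only on $S$ and two edges of $M$, and (C3), that $T$ depends only on $M$. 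So the protocol proceeds as follows: Bob communicates to Alice the (at most) two matching edges of $M$ that have both endpoints in $S$ — this requires only $O(\log n)$ bits, the cost of naming a constant number of edges. Now Alice can locally construct the comb $C$ from her knowledge of $S$ together with these two edges, and Bob can locally construct the tour $T$ from $M$ alone. They then simply run the assumed comb EF-protocol on the row index $C$ and column index $T$; its output has expectation $\text{sl}_{\text{comb}}(C,T)=\text{sl}_{\text{odd}}(S,M)$, which is exactly what is required.

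The main obstacle is handling the boundary cases excluded by the hypotheses of Lemma~\ref{lem:parsimonious_ct}, namely when $S$ fails to contain two full matching edges or when $|S|<5$. When $S$ contains at most one matching edge, every edge of $M$ incident to $S$ crosses $\delta(S)$, so $\text{sl}_{\text{odd}}(S,M)=x(\delta(S))-1$ can be computed directly: Alice and Bob can determine $|\delta(S)\cap M|$ by Bob sending the constant number of relevant incident edges, again in $O(\log n)$ bits, and output the slack deterministically. The small-$S$ cases with $|S|\in\{1,3\}$ are likewise finite and can be dispatched with $O(\log n)$ bits of direct communication. Since $t$ is a fixed parameter and the only genuinely expensive communication is the single invocation of the comb protocol, the total complexity of the matching protocol is $o(n/t)+O(\log n)=o(n/t)$, contradicting Proposition~\ref{prop:rothvoss-pm}. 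This contradiction establishes that the comb protocol must itself have complexity $\Omega(n/t)$, and the equivalence between protocol complexity and extension complexity in Proposition~\ref{prop:ffgt} then yields $\xc=2^{\Omega(n/t)}$ for the polytope of $(h,t)$-uniform comb inequalities.
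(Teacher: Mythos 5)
Your overall architecture matches the paper's: reduce from the perfect matching slack matrix on $K_{n/t}$, use Lemma~\ref{lem:parsimonious_ct} in the ``two matching edges inside $S$'' case, and close with Propositions~\ref{prop:rothvoss-pm} and~\ref{prop:ffgt}. But the protocol you describe has two genuine gaps. First, you have Bob send Alice ``the (at most) two matching edges of $M$ that have both endpoints in $S$.'' Bob does not know $S$ --- he only holds $M$ --- so he cannot identify which of his edges lie inside $S$; moreover there can be up to $|S|/2$ such edges, not two. The paper resolves this with a short interactive exchange: Alice names an arbitrary $w_1\in S$, Bob replies with its matching partner $w_2$, Alice names $w_3\in S$ with $w_3\neq w_2$, Bob replies with $w_4$, and then a single bit from Alice records whether both $w_2,w_4$ landed in $S$. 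This costs $4\lceil\log(n/t)\rceil+1$ bits and either produces two matching edges inside $S$ (feeding Lemma~\ref{lem:parsimonious_ct}) or exhibits one edge $e\in\delta(S)\cap M$.

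Second, your fallback case is broken. When $S$ does not contain two matching edges, $|\delta(S)\cap M|$ can be as large as $|S|$, i.e.\ linear in $n/t$, so Bob cannot ``send the constant number of relevant incident edges'' and the slack cannot be computed deterministically within an $O(\log n)$ budget. The paper instead exploits the fact that an EF-protocol need only be correct \emph{in expectation}: having identified one edge $e\in\delta(S)\cap M$, Bob samples a uniformly random edge $e'$ of his matching and sends it; Alice outputs a suitably scaled indicator of $e'\in\delta(S)\setminus\{e\}$, giving expected output $|\delta(S)\cap M|-1$ with only $O(\log(n/t))$ bits. This randomized estimation step is the essential idea your write-up is missing; without it (and without a legal way to detect which case you are in), the reduction does not go through as stated.
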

\begin{proof}
Due to Proposition \ref{prop:rothvoss-pm}, it suffices to show if such a 
protocol uses $r$ bits, then an EF-protocol for the perfect matching 
polytope for $K_{n/t}$ can be constructed, that uses $r+\Oh(\log{(n/t)})$ bits.  The protocol for 
computing the slack of an odd set inequality with respect to a perfect 
matching in $K_{n/t}$ works as follows.

Suppose Alice has an odd set $S$ in $K_{n/t},$ with $|S|=s,$ and Bob has a 
matching $M$ in $K_{n/t}.$ The slack of the odd-set inequality corresponding 
to $S$ with respect to matching $M$ in the perfect matching polytope for $K_{n
/t}$ is $| \delta(S)\cap M|-1.$ 

We assume that $s\geqslant 5.$ Otherwise, Alice can send the identity of the 
entire set $S$ with at most $4\log{(n/t)}$ bits and Bob can output the slack 
exactly.

Alice first sends an arbitrary vertex $w_1\in S$, to Bob. Bob replies with the 
matching vertex of $w_1,$ say $w_2.$ Alice then sends another arbitrary 
vertex $w_3\in S, w_3\neq w_2$ to Bob who again replies with the matching 
vertex for $w_3$, say $w_4.$ So far the number of bits exchanged is $4
\left\lceil\log{(n/t)}\right\rceil$. 

Now there are two possibilities: either at least one of the vertices $w_2,w_4$ 
is not in $S$, or both $w_2 ,w_4$ are in $S.$  Alice sends one bit to 
communicate which of the possibilities has occurred and accordingly they 
switch to one of the two protocols as described next.

In the former case, Alice has identified an edge, say $e$, in $\delta(S)\cap 
M. $ Now Bob selects an edge $e'$ of his matching uniformly at random (i.e. 
with probability $2/n$) and sends it to Alice. If $e'$ is in $\delta(S) 
\setminus\{e\}$, Alice outputs $n/2.$ Otherwise, Alice outputs zero. The
expected contribution by edges in $(\delta(S)\cap M)\setminus\{e\}$ is then
exactly one while the expected contribution of all other edges is zero.
Therefore the expected output is $|\delta(S)\cap M|-1,$ and the number 
of bits exchanged for this step is $\left\lceil\log{m}\right\rceil$ where $m$ 
is the number of edges in $K_{n/t}.$ Thus the total cost in this case is $\Oh(
\log{(n/t)})$ bits.

In the latter case, the matching edges $(w_1,w_2)$ and $(w_3,w_4)$ lie inside 
$S.$ Alice constructs a comb $C$ in the $t$-subdivided prism of $K_{n/
t}$, 
and Bob a TSP tour $T$ in the $t$-subdivided prism of $K_{n/t}$ such that $(
C,T)$
  satisfies conditions (C1)--(C4). By Lemma \ref{lem:parsimonious_ct} they 
can do this without exchanging any more bits.  Since $ \text{sl}_{ \text{  
comb}}( C,T)=\text{sl}_{\text{odd}}(S,M)$,  they proceed to compute the 
corresponding slack with the new inequality and tour, exchanging $r$ bits. 
The total number of bits exchanged in this case is $r+4\left\lceil\log{(n/t)}
\right\rceil+1=r+\Oh(\log{(n/t)}).$ 
\end{proof}

\section{Applications}\label{sec:results}
In this section we consider the extension complexity of the polytope of comb 
inequalities and $\HH$-free extension complexity of the TSP polytope when $\HH$
is the set of simple comb inequalities. As we will see, the results in this
section are obtained by instantiating Lemma
\ref{lem:ef-protocol-kl-uniform-to-pm} with different values of the parameters
$h$ and $t$.

\subsection{Extension complexity of Comb inequalities}
We show that the polytope defined by the Comb inequalities has high extension
complexity. 

\begin{theorem}\label{thm:uniform_comb_xc}
Let COMB$(n)$ be the polytope defined by the intersection of all comb 
inequalities for $\TSP_n.$ Then $\xc(\text{COMB}(n))\geqslant
2^{\Omega(n)}.$
\end{theorem}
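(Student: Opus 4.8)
The plan is to obtain the theorem as an immediate instantiation of Lemma \ref{lem:ef-protocol-kl-uniform-to-pm}. I would fix any constant $t\geqslant 2$ together with some $1\leqslant h<t$, the simplest choice being $t=2$ and $h=1$, for which the $(h,t)$-uniform combs are exactly the $2$-matching inequalities. With $t$ held constant the conclusion of Lemma \ref{lem:ef-protocol-kl-uniform-to-pm} reads $\Omega(n/t)=\Omega(n)$; that is, any EF-protocol computing the slack of this fixed family of combs against the TSP tours of $K_n$ needs $\Omega(n)$ bits. It then remains to transfer this bound from the polytope cut out by a single family of $(h,t)$-uniform combs to $\text{COMB}(n)$, the polytope cut out by \emph{all} comb inequalities.

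The transfer must be carried out at the level of slack matrices rather than of polytopes. Adding inequalities shrinks a polytope, and extension complexity is not monotone under inclusion, so one cannot argue directly from a containment $\text{COMB}(n)\subseteq U$, where $U$ denotes the relaxation by the chosen combs. Instead I would observe that the slack matrix of $U$, with rows indexed by the $(h,t)$-uniform combs $C$, columns indexed by the TSP tours $T$, and entries $\text{sl}_{\text{comb}}(C,T)$, is a \emph{submatrix} of the slack matrix of $\text{COMB}(n)$. Indeed, each $(h,t)$-uniform comb inequality is a comb inequality and hence one of the valid inequalities defining $\text{COMB}(n)$, so it occurs as a row; and every TSP tour is a $0/1$ point of $\TSP_n\subseteq\text{COMB}(n)\subseteq[0,1]^E$, hence a vertex of $\text{COMB}(n)$, so it occurs as a column. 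The shared entries agree, since the slack of a fixed comb inequality at a fixed tour does not depend on which polytope we view the inequality and the tour as belonging to.

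With the submatrix relation in hand I would run the reduction through EF-protocols. Given any EF-protocol that correctly computes the slack matrix of $\text{COMB}(n)$, restrict its inputs so that Alice only ever receives an $(h,t)$-uniform comb and Bob only ever receives a TSP tour. The restricted protocol correctly computes, in expectation, the corresponding slacks $\text{sl}_{\text{comb}}(C,T)$ and uses no more bits than the original. By Lemma \ref{lem:ef-protocol-kl-uniform-to-pm} this restricted protocol already requires $\Omega(n/t)=\Omega(n)$ bits, so every EF-protocol for the slack matrix of $\text{COMB}(n)$ uses $\Omega(n)$ bits. Proposition \ref{prop:ffgt} then yields $\xc(\text{COMB}(n))\geqslant 2^{\Omega(n)}$, as claimed.

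The heavy lifting is entirely contained in Lemma \ref{lem:ef-protocol-kl-uniform-to-pm}, and the present statement is essentially a corollary. The only points I expect to need care are that the reduction is expressed via submatrices of slack matrices, so that the non-monotonicity of extension complexity under polytope inclusion is never invoked, and that the TSP tours really are vertices and the chosen combs really are among the defining inequalities of $\text{COMB}(n)$. Both hold under the natural reading of $\text{COMB}(n)$ as the comb relaxation of $\TSP_n$ that retains the degree and box constraints and hence lies in $[0,1]^E$.
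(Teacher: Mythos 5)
Your proposal is correct and follows essentially the same route as the paper: instantiate Lemma \ref{lem:ef-protocol-kl-uniform-to-pm} with the $(1,2)$-uniform combs (so $t$ is constant and $\Omega(n/t)=\Omega(n)$), then observe that an EF-protocol for the slack matrix of $\text{COMB}(n)$ restricts to one for these combs versus the TSP tours, and conclude via Propositions \ref{prop:rothvoss-pm} and \ref{prop:ffgt}. You merely make explicit the submatrix justification that the paper leaves implicit.
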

\begin{proof}
Suppose there exists an EF-protocol that computes the slack of COMB$(n)$ that 
uses $r$ bits. Since $(1,2)$-uniform comb inequalities are valid for $\TSP_n$
 we can use the given protocol to compute the slack of these inequalities 
with respect to the TSP tours of $K_{n}$ using $r$ bits. Then, using Lemma 
\ref{lem:ef-protocol-kl-uniform-to-pm}, the slack matrix of the perfect 
matching polytope for $K_{n/2}$ can be computed using $r+\Oh(\log n)$ bits. 
By Proposition \ref{prop:rothvoss-pm}, this must be $\Omega(n)$. Finally, by 
Proposition \ref{prop:ffgt} this implies that $\xc(\text{COMB}(n))\geqslant 2^
{\Omega(n)}.$
\end{proof}

\subsection{$\HH$-free extension complexity}
Let $\mathcal{C}_{h,t}$ be the set of $(h,t)$-uniform comb inequalities for 
fixed values of $h$ and $t.$ Observe that, since at least three teeth are
required to define a comb and the handle must contain some vertex not in any
teeth, for $(h,t)$-uniform combs on $n$ vertices we must have $t\leqslant
\left\lfloor\frac{n-1}{3}\right\rfloor.$ So for any 
values of $1\leqslant h < t\leqslant \left\lfloor\frac{n-1}{3} \right\rfloor,$ 
the set $\mathcal{C}_{h,t}$ is a nonempty set of facet-defining inequalities 
for $\TSP_n$, and for any other values of $h$ and $t$ the set $ 
\mathcal{C}_{h,t}$ is empty. 

\begin{theorem}\label{thm:hfree_main} 
If $\HH$ is a set of inequalities valid for the polytope $\TSP_n,$ such that
$\HH 
\cap\mathcal{C}_{h,t}=\emptyset$ for some nonempty $\mathcal{C}_{h,t}$, then 
the $\HH$-free extension complexity of $\TSP_n$ is at least $2^{\Omega(n/t)}.$ 
\end{theorem}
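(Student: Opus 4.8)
The plan is to reduce the $\HH$-free extension complexity of $\TSP_n$ to the extension complexity of the polytope defined by the single class $\mathcal{C}_{h,t}$, and then invoke Lemma~\ref{lem:ef-protocol-kl-uniform-to-pm}. The key observation is that the $\HH$-free TSP polytope is, by definition, the polytope $P_\HH$ obtained by intersecting all facet-defining inequalities of $\TSP_n$ \emph{except} those in $\HH$. Since $\HH \cap \mathcal{C}_{h,t} = \emptyset$, every inequality in $\mathcal{C}_{h,t}$ survives the removal of $\HH$ and is therefore among the inequalities defining $P_\HH$. Consequently $P_\HH \subseteq \mathrm{COMB}_{h,t}(n)$, where $\mathrm{COMB}_{h,t}(n)$ denotes the polytope defined by intersecting only the $(h,t)$-uniform comb inequalities.

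The first step I would carry out is to make this containment precise and argue that it gives a lower bound on $\xc$. The cleanest route is to work directly with EF-protocols rather than with faces. Suppose there is an EF-protocol computing the slack matrix of $P_\HH$ using $r$ bits. Because every inequality in $\mathcal{C}_{h,t}$ is a facet-defining (hence valid, hence a row of the slack matrix of) inequality of $P_\HH$, and every TSP tour of $K_n$ is a vertex of $P_\HH$ (removing inequalities only enlarges the polytope, so all original vertices remain and correspond to columns), the given protocol already computes, as a submatrix, the slacks of the $(h,t)$-uniform comb inequalities against the TSP tours. I would therefore restrict the protocol to this submatrix at no extra communication cost, obtaining an EF-protocol of $r$ bits for the slacks of the $(h,t)$-uniform comb inequalities with respect to the TSP tours of $K_n$.

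The second step is to apply Lemma~\ref{lem:ef-protocol-kl-uniform-to-pm} verbatim: any such protocol requires an exchange of $\Omega(n/t)$ bits. Hence $r = \Omega(n/t)$, and by the equivalence in Proposition~\ref{prop:ffgt} this forces $\xc(P_\HH) \geqslant 2^{\Omega(n/t)}$. This is exactly the structure used in the proof of Theorem~\ref{thm:uniform_comb_xc}, except that the roles of ``all comb inequalities'' and a fixed $h,t$ are now played by the $\HH$-free inequality set and the surviving class $\mathcal{C}_{h,t}$, so I expect the argument to be essentially identical in skeleton.

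The main subtlety to get right is the bookkeeping on vertices and inequalities, i.e.\ confirming that the slack submatrix indexed by $\mathcal{C}_{h,t}$ and the TSP tours genuinely appears inside the slack matrix of $P_\HH$. This requires that (i) $\mathcal{C}_{h,t}$ is nonempty and its members are valid for $\TSP_n$ (so they appear as rows once $\HH$ is removed, which is guaranteed since they are facets disjoint from $\HH$), and (ii) the vertices of $P_\HH$ include all TSP tours, so that the relevant columns are present. Both hold because removing inequalities can only add facets-as-rows that we ignore and can only add vertices-as-columns, never delete the original tours; restricting to a submatrix never increases protocol complexity. Once this is spelled out, no new calculation is needed and the theorem follows immediately.
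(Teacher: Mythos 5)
Your proof is correct and takes essentially the same route as the paper's, which is just a terser version of your argument: since $\mathcal{C}_{h,t}$ survives the removal of $\HH$, the slack matrix of Lemma~\ref{lem:ef-protocol-kl-uniform-to-pm} sits as a submatrix of the slack matrix of the $\HH$-free polytope, and one concludes via Lemma~\ref{lem:ef-protocol-kl-uniform-to-pm} and Proposition~\ref{prop:ffgt}. (One harmless imprecision: the TSP tours need not remain \emph{vertices} of the enlarged polytope, but the slack-matrix/extension-complexity correspondence applies to any points of the polytope and any valid inequalities, so the submatrix argument goes through unchanged.)
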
 
\begin{proof} 
Let $1\leqslant h<t$ be integers such that $\HH\cap\mathcal{C}_{h,t}=\emptyset.$
That is, the set $\HH$ does not contain any $(h,t)$-uniform comb inequalities.
Let 
$P$ be the polytope formed from $\TSP_n$ by throwing away any facet-defining 
inequalities that are in $\HH$. Then, any EF-protocol computing the 
slack matrix of $P$ correctly must use $\Omega(n/t)$ bits due to Lemma 
\ref{lem:ef-protocol-kl-uniform-to-pm}. The claim then follows from Proposition 
\ref{prop:ffgt}. 
\end{proof}

%\begin{remark} 
The above theorem shows that for every set $\HH$ of valid inequalities of
$\TSP_n$, 
if the extension complexity of the TSP polytope becomes polynomial after 
removing the inequalities in $\HH$, then $\HH$ must contain some inequalities 
from every $(h,t)$-uniform comb inequality class, for all $t=o(n/\log{n}).$ The 
theorem can easily be made stronger by replacing the requirement $\HH 
\cap\mathcal{C}_{h,t}=\emptyset$ with $|\HH \cap\mathcal{C}_{h,t}|\leqslant 
\text{poly}(n)$. (See the discussion about $\HH$-free extension complexity of
$\TSP_n$ 
with respect to subtour inequalities in Avis and Tiwary \cite{AT15} for 
clarification.) 
%\end{remark}

We can use the above theorem to give lower bounds for $\HH$-free extension 
complexity of the TSP polytope with respect to important classes of valid 
inequalities by simply demonstrating some class of $(h,t)$-uniform comb 
inequalities that has been missed.

\subsubsection*{$2$-matching inequalities}
\begin{corollary}\label{cor:2_matching}Let $P$ be the polytope obtained by 
removing the $2$-matching inequalities from the TSP polytope. Then,
$\xc(P)=2^{\Omega(n)}.$
\end{corollary}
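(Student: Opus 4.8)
The plan is to apply Theorem \ref{thm:hfree_main} directly, after first recording the exact correspondence between the $2$-matching inequalities and one particular uniform class, and then choosing a \emph{different} uniform class that the removed set $\HH$ cannot possibly contain. The key structural observation I would start from is that a $2$-matching inequality has every tooth $T_i$ of size exactly two with $|H\cap T_i|=1$ (equivalently $|T_i\setminus H|=1$), so it is precisely a $(1,2)$-uniform comb inequality; conversely every $(1,2)$-uniform comb inequality satisfies $|T_i|=2$ and $|H\cap T_i|=1$ and is therefore a $2$-matching inequality. Hence if $\HH$ denotes the set of all $2$-matching inequalities, then $\HH=\mathcal{C}_{1,2}$, and in particular $\HH$ contains no comb inequality whose teeth have size different from $2$.

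Next I would select a uniform class whose teeth are larger than two, the cleanest choice being $(h,t)=(1,3)$. Every $(1,3)$-uniform comb has teeth of size $3\neq 2$, so no member of $\mathcal{C}_{1,3}$ is a $2$-matching inequality, giving $\HH\cap\mathcal{C}_{1,3}=\emptyset$. I also need $\mathcal{C}_{1,3}$ to be nonempty, which by the counting remark preceding Theorem \ref{thm:hfree_main} holds whenever $t\leqslant\lfloor(n-1)/3\rfloor$, i.e. for all sufficiently large $n$ (concretely $n\geqslant 10$); since we only seek an asymptotic statement this suffices.

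With these two facts in hand, Theorem \ref{thm:hfree_main} applies with $t=3$ and yields that the $\HH$-free extension complexity of $\TSP_n$, which is by definition $\xc(P)$, is at least $2^{\Omega(n/3)}$. As $t=3$ is a constant this is $2^{\Omega(n)}$, matching the claimed bound. I do not expect any genuine obstacle here, since the entire difficulty has already been absorbed into Lemma \ref{lem:parsimonious_ct} and Theorem \ref{thm:hfree_main}; the only points needing a moment's care are verifying both inclusions of $\mathcal{C}_{1,2}=\HH$ so that disjointness from $\mathcal{C}_{1,3}$ is immediate, and checking the nonemptiness constraint $t\leqslant\lfloor(n-1)/3\rfloor$ for the chosen value $t=3$.
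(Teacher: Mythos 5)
Your proposal is correct and follows essentially the same route as the paper: observe that $(1,3)$-uniform combs have teeth of size three and hence cannot be $2$-matching inequalities, then invoke Theorem \ref{thm:hfree_main} with the constant $t=3$. The extra details you supply (the identification of the $2$-matching inequalities with $\mathcal{C}_{1,2}$ and the nonemptiness check $t\leqslant\lfloor(n-1)/3\rfloor$) are accurate but not needed beyond what the paper's one-line argument already uses.
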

\begin{proof}
The $2$-matching inequalities are defined by combs for which each tooth has 
size exactly two. Therefore the set of $(1,3)$-uniform combs are not
$2$-matching 
inequalities, and Theorem \ref{thm:hfree_main} applies.
\end{proof}

\subsubsection*{Simple comb inequalities}
\begin{corollary}\label{cor:simple_comb} 
Let $P$ is the polytope obtained by removing the set of simple comb 
inequalities from the TSP polytope. Then, $\xc(P)=2^{\Omega(n)}.$
\end{corollary}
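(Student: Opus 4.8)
The plan is to mirror the proof of Corollary~\ref{cor:2_matching} and reduce directly to Theorem~\ref{thm:hfree_main}, by exhibiting a nonempty class $\mathcal{C}_{h,t}$ of $(h,t)$-uniform comb inequalities that contains no simple comb inequality. Recall that a comb is simple when, for \emph{each} tooth $T_i$, we have $|H\cap T_i|=1$ or $|T_i\setminus H|=1$. For an $(h,t)$-uniform comb every tooth satisfies $|H\cap T_i|=h$ and therefore $|T_i\setminus H|=t-h$, so such a comb fails to be simple precisely when both disjuncts fail on every tooth, i.e. when $h\neq 1$ and $t-h\neq 1$. Since $1\leqslant h<t$, this amounts to requiring $h\geqslant 2$ together with $t-h\geqslant 2$.

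First I would fix the smallest convenient constants meeting these constraints, namely $h=2$ and $t=4$, so that $|H\cap T_i|=2$ and $|T_i\setminus H|=2$ for every tooth. I would then verify that neither condition in the definition of a simple comb can hold, so that no $(2,4)$-uniform comb inequality is a simple comb inequality. Hence, taking $\HH$ to be the set of all simple comb inequalities gives $\HH\cap\mathcal{C}_{2,4}=\emptyset$; and $\mathcal{C}_{2,4}$ is nonempty once $n$ is large enough, since it suffices that $t=4\leqslant\lfloor(n-1)/3\rfloor$, i.e. $n\geqslant 13$.

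Finally I would invoke Theorem~\ref{thm:hfree_main} with these parameters to conclude that
$$\xc(P)\geqslant 2^{\Omega(n/t)}=2^{\Omega(n/4)}=2^{\Omega(n)},$$
the last equality holding because $t=4$ is a constant independent of $n$.

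The argument presents no genuine obstacle; the only point requiring care is the correct negation of the defining disjunction of a simple comb. One must ensure that the chosen $(h,t)$-uniform combs violate \emph{both} equalities $|H\cap T_i|=1$ and $|T_i\setminus H|=1$ simultaneously on every tooth, not merely one of them. This is exactly why the choice $h\geqslant 2$ and $t-h\geqslant 2$ is forced, and why the parameters $(1,3)$ used for the $2$-matching case do not suffice here: a $(1,3)$-uniform comb has $|H\cap T_i|=1$ and so \emph{is} a simple comb.
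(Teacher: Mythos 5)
Your proposal is correct and follows exactly the paper's own argument: the paper likewise observes that $(2,4)$-uniform combs satisfy $|H\cap T_i|=2$ and $|T_i\setminus H|=2$, hence are not simple, and then invokes Theorem~\ref{thm:hfree_main} with the constant $t=4$. Your additional care with the negation of the disjunction and the nonemptiness condition $n\geqslant 13$ is sound but adds nothing beyond what the paper's shorter proof already relies on.
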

\begin{proof}
Recall that a comb is called simple if $|H\cap T_i|=1$ or $|T_i\setminus H|=1$
 for all $1\leqslant i\leqslant k$ where $k$ is the (odd) number of teeth in 
the comb and $H$ is the handle. Clearly,  $(2,4)$-uniform combs are not 
simple and Theorem \ref{thm:hfree_main} applies.
\end{proof}

As mentioned before, simple comb inequalities define a superclass of
$2$-matching 
inequalities and a polynomial time separation algorithm is known for
$2$-matching 
inequalities. Althought a similar result was claimed for simple comb 
inequalities, the proof was apparently incorrect, as pointed out by Fleischer 
et al. \cite{FLL06}. This latter paper includes a polynomial time separation 
algorithm for the wider class of simple domino-parity inequalities that we do 
not consider here.

We leave as an open problem whether there exists a polynomial time separation 
algorithm for the $(h,t)$-uniform comb inequalities.

\section*{Acknowledgment}
Research of the first author is supported by a Grant-in-Aid for Scientific 
Research on Innovative Areas -- Exploring the Limits of Computation, MEXT, 
Japan. Research of the second author is partially supported by GA~\v{C}R grant
P202-13/201414.

%\section*{References}
  \bibliographystyle{elsarticle-num} 
  \bibliography{hfree_tsp}

\begin{thebibliography}{10}
\expandafter\ifx\csname url\endcsname\relax
  \def\url#1{\texttt{#1}}\fi
\expandafter\ifx\csname urlprefix\endcsname\relax\def\urlprefix{URL }\fi
\expandafter\ifx\csname href\endcsname\relax
  \def\href#1#2{#2} \def\path#1{#1}\fi

\bibitem{ConfortiCornuejolsZambelli10}
M.~Conforti, G.~Cornu{\'e}jols, G.~Zambelli, Extended formulations in
  combinatorial optimization, 4OR 8 (2010) 1--48.

\bibitem{FMPTW}
S.~Fiorini, S.~Massar, S.~Pokutta, H.~R. Tiwary, R.~de~Wolf, Linear vs.
  semidefinite extended formulations: exponential separation and strong lower
  bounds, in: STOC, 2012, pp. 95--106.

\bibitem{AT13}
D.~Avis, H.~R. Tiwary, On the extension complexity of combinatorial polytopes,
  in: ICALP, 2013, pp. 57--68.

\bibitem{PV13}
S.~Pokutta, M.~V. Vyve, A note on the extension complexity of the knapsack
  polytope, To appear in Operations Research Letters.

\bibitem{Rothvoss14}
T.~Rothvo{\ss}, The matching polytope has exponential extension complexity, in:
  Symposium on Theory of Computing, {STOC}, 2014, pp. 263--272.

\bibitem{AT15}
D.~Avis, H.~R. Tiwary, A generalization of extension complexity that captures
  {P}, Inf. Process. Lett. 115~(6-8) (2015) 588--593.

\bibitem{PaRa82}
M.~Padberg, M.~R. Rao, Odd minimum cut-sets and $b$-matchings, Mathem. Oper.
  Res. 7 (1982) 67--80.

\bibitem{FLL06}
L.~Fleischer, A.~N. Letchford, A.~Lodi, Polynomial-time separation of a
  superclass of simple comb inequalities, Math. Oper. Res. 31~(4) (2006)
  696--713.

\bibitem{FFGT12}
Y.~Faenza, S.~Fiorini, R.~Grappe, H.~R. Tiwary, Extended formulations,
  nonnegative factorizations, and randomized communication protocols, in: ISCO,
  2012, pp. 129--140.

\bibitem{GrotschelP79a}
M.~Gr{\"{o}}tschel, M.~Padberg, On the symmetric travelling salesman problem
  {II:} lifting theorems and facets, Math. Program. 16~(1) (1979) 281--302.

\end{thebibliography}

\end{document}